\newsavebox{\tempbox}
\newcommand{\cbox}[2]{%
 \fcolorbox{black}{#1}{\texttt{#2\strut}}\kern-\fboxrule}
\newtheorem{theorem}{Theorem}
\newtheorem{lemma}{Lemma} 
\newtheorem{definition}{Definition} 
\newtheorem{proposition}{Proposition} 
\newtheorem{observation}{Observation}
\newtheorem{claim}{Claim}
\newenvironment{proof-sketch}{\noindent{\bf Proof Sketch:}\hspace*{1em}}{\qed}
\long\def\greybox#1{%
    \newbox\contentbox%
    \newbox\bkgdbox%
    \setbox\contentbox\hbox to \hsize{%
        \vtop{
            \kern\columnsep
            \hbox to \hsize{%
                \kern\columnsep%
                \advance\hsize by -2\columnsep%
                \setlength{1.1 \textwidth}{\hsize}%
                \vbox{
                    \parskip=\baselineskip
                    \parindent=0bp
                    #1
                }%
                \kern\columnsep%
            }%
            \kern\columnsep%
        }%
    }%
    \setbox\bkgdbox\vbox{
        \pdfliteral{0.85 0.85 0.85 rg}
        \hrule width  \wd\contentbox %
               height \ht\contentbox %
               depth  \dp\contentbox
        \pdfliteral{0 0 0 rg}
    }%
    \wd\bkgdbox=0bp%
    \vbox{\hbox to \hsize{\box\bkgdbox\box\contentbox}}%
    \vskip\baselineskip%
}
\begin{document}

\title{\textbf{Stochastic Stability in Schelling's Segregation Model with Markovian Asynchronous Update}
}

\author{Gabriel Istrate\footnote{Department of Computer Science, West University of Timi\c{s}oara, and e-Austria Research Institute, Bd. V. P\^{a}rvan 4, cam. 045 B, Timi\c{s}oara,  Romaniam email: gabrielistrate@acm.org. This work was supported by a grant of the Ministry of Research and Innovation, CNCS - UEFISCDI, project number PN-III-P4-ID-PCE-2016-0842, within PNCDI III.}}

\maketitle


\begin{abstract}
We investigate the dependence of steady-state properties of
Schelling's segregation model on the
agents' activation order. Our basic formalism is the Pollicott-Weiss
version  of Schelling's segregation model. Our
main result modifies this baseline scenario by incorporating {\it contagion} in the decision to move: (pairs of) agents are connected by a second, {\it agent  influence} network. Pair activation is specified
by a random walk on this network.

The considered schedulers choose the next pair \textit{nonadaptively}. We can complement this result by an example of adaptive scheduler (even one that is quite fair) that is able to preclude maximal segregation. Thus scheduler nonadaptiveness seems to be required for the validity of the original result under arbitrary asynchronous scheduling. The analysis (and 
our result) are part of an {\it adversarial scheduling approach} we are advocating to evolutionary games and social simulations.
\end{abstract}

\section{Introduction}
\label{intro}

Schelling's Segregation Model \cite{schelling-segregation} is one of
the fundamental dynamical systems of  Agent-Based Computational
Economics, perhaps one of the most convincing examples of Asynchronous Cellular Automata (ACA) \cite{fates2013guided} employed in the social sciences. It exhibits large-scale self-organizing neighborhoods, due to agents' desire to live close to their own kind. A remarkable feature
of the model that has captured the attention of social scientists is the fact that segregation is an {\it emergent} phenomenon, 
that may appear even in the presence of just mild preferences (at the individual level) towards living with one's own kind. The model has sparked a significant interest and work, coming from various areas such as Statistical Physics \cite{physical-schelling}, agent-based computational economics \cite{pancs-vriend,segregation-jebo-zhang}, game
theory \cite{peyton-young-book}, theoretical computer science \cite{brandt2012analysis,montanari2009convergence,auletta2013logit}, or applied mathematics \cite{pollicott-weiss}.

Schelling's segregation model is an asynchronous dynamical system on a graph (usually a finite portion of the one-dimensional or the two-dimensional lattice). It can be described, informally as follows: vertices in the graph are in one of three states: \textit{unoccupied}, when no agent sits on the given node, or one of \textit{red/blue} ($\pm 1$), corresponding to the color of the agent inhabiting the node. Agents have a (non-strict) preference towards living among agents of the same color. This is modeled by considering a \textit{local neighborhood} around the agent. Depending on the density of like-colored agents in the neighborhood the agent may be 
in one of two states: \textit{happy} and \textit{unhappy}. An unhappy agent may seek to trade places with another agent in order to become happy. It was originally observed via "pen-and-paper simulations'', and proved rigorously in a variety of settings, that segregated states may arise even when agent only have a weak preference for its own color, and are happy to live in a mixed neighborhood, as long as it contains ``enough'' of its own kind. Difference in the topology, activation order, specification of the update mechanism account for the dizzying variety of variants of the model that have been investigated so far (for the intellectual context of the model and a related one, due to Sakoda, see \cite{hegselmann2017thomas}). 

Qualitative properties of asynchronous cellular automata are highly dependent on activation order \cite{fates2004experimental,boure2012probing}. In particular, when viewed as dynamical systems, ACA  may exhibit a multitude of limit cycles
, and the update dynamics  "chooses" one of these limit cycle in a path dependent manner. The challenge then becomes to explain the selection of one particular limit cycle among many possible ones. 

 One particularly interesting class of techniques, brought to evolutionary games by Foster and Young \cite{foster1990stochastic} (see also \cite{peyton-young-book}) uses the  concept of \textit{stochastic stability} to deal with this problem.  It was the fundamental insight of Peyton-Young \cite{peyton-young-book} that adding continuous small perturbations to a certain dynamics might help ``steer''| the system towards a particular subset of equilibria, the so-called {\it stochastically stable states}. Indeed, in several versions of Schelling's segregation model
  \cite{young2001dynamics,peyton-young-book,segregation-jebo-zhang} the most segregates states are identified as precisely the stochastically stable equilibria of the dynamics. 

Though such results are interesting, they are still not realistic enough enough: results about stochastic stability in models on graphs may be sensitive to the precise specification of the update order, which in realistic scenarios need not be the random one. As noted in many papers, precise specification of an asynchronous 
schedule in social systems can arise from many factors, including geography or agent incentives \cite{page1997incentives}. It is thus important to study validity of baseline results under different scheduling models. A dramatic example of this type is that of the related model of \textit{logit response dynamics}, another model analyzed via stochastic stability \cite{kandorilearning}.  Going in this model from a random single-node update to parallel modes (the so-called \textit{revision process} of \cite{alos2010logit}) may lead (in general games) to the selection (via stochastic stability) of states that are not even Nash equilbria. In contrast, for local interaction games the parallel \textit{all-logit} rule has a Gibbs limiting distribution
\cite{auletta2013logit}, similar to the random update case. 

Neither random scheduling nor parallel update can accurately model {\it social contagion} phenomena, i.e., agents becoming active as a result of other agents' action, via communication or imitation. 
Thus it is of interest to study the robustness social models to variations in the update rule. Indeed, in  
\cite{adversarial-cie,adversarial-mscs} we have proposed an 
{\it adversarial approach to social simulations}.  Roughly speaking, this means that we consider the baseline  dynamics under random scheduling, then modify the update order to arbitrary scheduling, and attempt to derive necessary and/or sufficient conditions on the scheduler that make the results from the random update case extend to the adversarial setting. We have accomplished this in \cite{adversarial-mscs} for Prisoners' Dilemma with Pavlov strategy, a Markov chain previously investigated in \cite{ipd:colearning}, and in \cite{adversarial-cie} for the logit response dynamics. 

The purpose of this paper is to introduce contagion in 
evolutionary versions of Schelling's segregation model,  
as studied by Pollicott and Weiss \cite{pollicott-weiss}, and study the setting where the set of agents that becomes active is specified by a random walk on a second "communication" network. A similar model was investigated for the logit response dynamics in \cite{adversarial-cie}, and is apparently consistent with some real-life contagion phenomena in power networks \cite{hines2017cascading}. In the most general setting this communication network works on \textit{pairs of vertices}. The more natural case where agents influence each other is a special case of our setting. The feature of the Pollicott-Weiss model that is of special interest to our
study is that, although Schelling's model might have multiple equlibria, it is only the most segregated states that are stochastically stable.  Our result
shows that this extends to a scenario with social contagion: we prove a result with a similar flavor under a more general nonadaptive model of activation. 

Even though we use analytic rather than experimental techniques, our results are naturally related to a long line of research that investigates the robustness of discrete models under various scheduling models \cite{fates2004perturbing,boure2012probing}. On the other hand the notion we consider, that of stochastic stability, is highly related to the analysis of cellular automata using  dynamical systems techniques \cite{dennunzio2012cellular}. In contrast to many such studies, though, that only perturb the initial system state, stochastic stability embodies the notion of stability under continuous (but vanishingly small) perturbations. 
\vspace{-3mm}
\section{Preliminaries}
\vspace{-2mm}
We first review the notion of {\it stochastic stability} for perturbed dynamical systems described by Markov chains: 

\begin{definition}\label{def:perturbed-markov}
 Let the Markov chain $P^{0}$ be defined on a finite state set $\Omega$.
For every $\epsilon > 0$, we also define a Markov chain $P^{\epsilon}$ on $\Omega$.
Family $(P^{\epsilon})_{\epsilon \geq 0} $ is called a {\it regular perturbed Markov process} if all
of the following conditions hold: 
\begin{itemize}
\item For every $\epsilon > 0$ Markov chain $P^{\epsilon}$ is irreducible and aperiodic.
\item For each pair of states $x,y\in \Omega$, $
\lim_{\epsilon > 0} P_{xy}^{\epsilon} = P_{xy}^{0}$.
\item Whenever $P_{xy}=0$ there exists a real number $r(m)>0$, called the {\it resistance of
transition $m=(x\rightarrow y)$}, such that as $\epsilon \rightarrow 0$,
$P_{xy}^{\epsilon}= \Theta(\epsilon^{r(m)})$.
\end{itemize}

\noindent Let $\mu^{\epsilon}$ be the stationary distrib. of
$P^{\epsilon}$. State $s$ is {\it stochastically stable } if
${\varliminf}_{\epsilon \rightarrow 0} \mu^{\epsilon}(s) > 0$.
\end{definition}

We use a standard tool in this area: a result due to 
Young (Lemma 3.2 in \cite{peyton-young-book}) that allows us to recognize stochastically stable states in a Markov Chain using {\it spanning
trees of minimal resistance}:

\begin{definition}
If $j\in S(G)$ is a state, a {\it tree rooted at node $j$} is a set
$T$ of edges so that for any state $w\neq j$ there exists an unique
(directed) path from $w$ to $j$. The {\it resistance of a rooted
tree $T$} is defined as the sum of resistances of all edges in $T$.
\end{definition}

\begin{proposition}\label{characterization} {\bf (Young)}
The stochastically stable states of a regular Markov process $(P_{\epsilon})$ are precisely those states
$z\in \Omega$ such that there exists a tree $T$ rooted at $z$ of minimal resistance
(among all rooted trees).
\end{proposition}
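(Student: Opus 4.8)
The plan is to obtain the statement from the Markov Chain Tree Theorem (the Freidlin--Wentzell formula), which expresses the stationary distribution of a finite irreducible Markov chain as a normalized sum of weighted spanning in-trees. Precisely, for a Markov chain $Q$ on $\Omega$ and a tree $T$ (in the sense of the preceding definition), put $\pi_Q(T)=\prod_{(x\to y)\in T}Q_{xy}$, and let $\mathcal{T}_w$ denote the set of trees rooted at $w$. Since $P^{\epsilon}$ is irreducible for every $\epsilon>0$, its stationary distribution is unique and the theorem gives
\[
\mu^{\epsilon}(z)\;=\;\frac{\sum_{T\in\mathcal{T}_z}\pi_{P^{\epsilon}}(T)}{\sum_{w\in\Omega}\sum_{T\in\mathcal{T}_w}\pi_{P^{\epsilon}}(T)}\qquad(z\in\Omega).
\]
Everything then reduces to an asymptotic analysis of the numerator and the denominator as $\epsilon\to 0$.

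First I would determine the order of a single tree weight. Extend the resistance function by setting $r(x\to y)=0$ whenever $P^{0}_{xy}>0$; this is consistent with the definition of the resistance $r(T)$ of a rooted tree as the sum of the resistances of its edges. For an edge with $P^{0}_{xy}>0$ we have $P^{\epsilon}_{xy}\to P^{0}_{xy}>0$, hence $P^{\epsilon}_{xy}=\Theta(\epsilon^{0})$ with a positive constant; for an edge with $P^{0}_{xy}=0$ the hypothesis gives $P^{\epsilon}_{xy}=\Theta(\epsilon^{r(x\to y)})$, again with a positive constant. Multiplying the $|\Omega|-1$ edge factors of $T$ yields $\pi_{P^{\epsilon}}(T)=\Theta(\epsilon^{r(T)})$ with a positive constant. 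Because $\mathcal{T}_w$ and $\Omega$ are finite and all these constants are positive, a finite sum of such terms is $\Theta(\epsilon^{\min})$ of the exponents occurring, with no cancellation; thus, writing $\rho(w)=\min_{T\in\mathcal{T}_w}r(T)$ and $\rho^{\ast}=\min_{w\in\Omega}\rho(w)$, we obtain $\sum_{T\in\mathcal{T}_z}\pi_{P^{\epsilon}}(T)=\Theta(\epsilon^{\rho(z)})$ and the denominator $=\Theta(\epsilon^{\rho^{\ast}})$.

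Dividing, $\mu^{\epsilon}(z)=\Theta(\epsilon^{\rho(z)-\rho^{\ast}})$. Since $\rho(z)\ge\rho^{\ast}$ always, the exponent is nonnegative, so $\varliminf_{\epsilon\to 0}\mu^{\epsilon}(z)>0$ holds if and only if $\rho(z)=\rho^{\ast}$, i.e.\ if and only if some tree rooted at $z$ attains the minimal resistance among all rooted trees; otherwise $\mu^{\epsilon}(z)\to 0$ and $z$ is not stochastically stable. This is exactly the asserted characterization. Granting the Markov Chain Tree Theorem as a black box, the only delicate point --- the main, if modest, obstacle --- is the bookkeeping in the asymptotic step: one must verify that every implicit constant in the $\Theta$-estimates is strictly positive, which is what prevents cancellation when tree weights are summed and lets the minimum-resistance trees control both numerator and denominator.
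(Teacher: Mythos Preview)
The paper does not prove this proposition at all: it is quoted as a known result, attributed to Young and cited as Lemma~3.2 in \cite{peyton-young-book}, and then used as a black box in the proof of Theorem~\ref{young-contagion}. So there is no ``paper's own proof'' to compare against.

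That said, your argument is correct and is essentially the standard derivation of Young's characterization: invoke the Markov Chain Tree Theorem to write $\mu^{\epsilon}(z)$ as a ratio of tree-weight sums, use the regularity hypotheses to get $\pi_{P^{\epsilon}}(T)=\Theta(\epsilon^{r(T)})$ with strictly positive implicit constants, sum over the finitely many trees (no cancellation, since all summands are nonnegative), and read off the exponent. This is exactly how the result is obtained in the literature (Young's own proof goes through the same Freidlin--Wentzell formula). The one point you might make more explicit is what happens to edges $(x\to y)$ that are absent from the perturbed chain as well (i.e., $P^{\epsilon}_{xy}=0$ for all $\epsilon$): trees containing such an edge have weight identically zero and can be dropped from all sums; irreducibility of $P^{\epsilon}$ guarantees that for every root $w$ at least one tree of finite resistance survives, so $\rho(w)<\infty$ and the asymptotics go through.
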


\vspace{-3mm}
\section{The model}
\vspace{-2mm}

We consider an $N\times N$ two-dimensional lattice graph $G$ with periodic boundary conditions (that is, a torus). Let $V$ be the set of vertices of
this graph. Each vertex of $G$ hosts an agent, colored either red or blue. 
The neighborhood of a vertex $v$  is the {\it four-point neighborhood}, consisting of the cell to the left,up,right,down of the cell holding $v$.  An agent's utility is written as  
$\forall i\in V$, $u_{i}(x)=r\cdot w(x_{i})+\epsilon_{i},$ where $r$ is a positive constant, assumed similarly to  \cite{segregation-jebo-zhang} to be the same for all agents, and $w(x)$ is
defined, similarly to 
\cite{pollicott-weiss}, as the difference between the number of
neighbors of $x$ having the same color and the number of neighbors
of $x$ having the opposite color. Finally,  $\epsilon_{i}$ are (possibly different) agent-specific constants.




Next we specify our scheduling model, defined as follows:

\begin{definition}\label{mc}
 {\bf [Markovian contagion]:} To each pair of vertices $e$ we associate a probability distribution $D_{e}$ on $V\times V$ such that $e\in supp(D_{e})$\mbox{  }\footnote{This translates, intuitively, to the following condition: we always give the participants in a swap the chance to immediately reevaluate their last move.}. 
We then choose the pair to be scheduled next as follows: Let $p_{i}$ be the pair chosen 
at stage $i$. Select the next scheduled pair $p_{i+1}$  by
sampling from the set of pairs in $D_{p_{i}}$. We assume that for any two pairs $e,e^{\prime}$ the following condition holds: 
\begin{equation}
Pr[e\rightarrow e^{\prime}]>0\Leftrightarrow Pr[e^{\prime} \rightarrow e] > 0. 
\label{revers}
\end{equation}

\end{definition}

In other words: the next scheduled pair only depends on the last
scheduled pair, succession relation $e\rightarrow e^{\prime}$ specifies a \textit{bidirected graph}
$H(G)$ whose vertex set is $V\times V$, and the scheduled pair can be seen as
performing a random walk (possibly a non-uniform one) on $H(G)$. In particular the next chosen pair is {\it not}
guaranteed to have different labels on endpoints. Furthermore, graph $H(G)$ should be connected, otherwise the choice of a particular initial sequence of moves could preclude a given edge from ever being scheduled sometimes in the future.

\begin{observation}\label{cont}
 A particular case of Definition~\ref{mc}, which justifies the name
contagion is described informally as follows: {\it agents}, rather
than pairs, are given the opportunity to switch. They randomly choose
a swapping neighbor among those available to them. There exists a
second, separate {\it influence  network} $I$. The next scheduled
agent is one of the neighbors (in $I$) of the previously scheduled
agent. Indeed, to describe this scenario in the setting of the previous definition, define $D_{e}$ to consist of pairs $e^{\prime}$ that share with $e$ a vertex.  
\end{observation}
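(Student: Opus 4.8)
To justify the observation one must exhibit distributions $(D_e)_{e\in V\times V}$ that realise the informally described ``infect a neighbour, then let it pick a swap partner'' rule and satisfy \emph{all three} requirements of Definition~\ref{mc}. The plan is to make the construction precise, and then verify, in order, $e\in\mathrm{supp}(D_e)$, the reversibility relation \eqref{revers}, and connectedness of $H(G)$.

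I would pin the construction down as follows. Fix an undirected influence network $I=(V,E_I)$ and, for every agent $u$, a distribution $q_u$ of full support on the lattice-neighbours $N(u)$ of $u$ (the ``randomly choose a swapping neighbour'' step). Given the current pair $e=(v,w)$, first sample the next \emph{initiator} $u$ from a distribution of full support on $N_I[v]\cup N_I[w]$, the union of the closed $I$-neighbourhoods of the two participants --- the closure encoding the footnote's clause that a participant of the last swap may itself re-initiate --- then sample $x\sim q_u$ and output $e'=(u,x)$. In the natural special case where $I$ is the lattice itself, every reachable $e'$ really does share a vertex with $e$ (since $u\in N[v]\cup N[w]$ and $x$ may be taken to be $v$ or $w$), exactly as in the statement; for a general $I$ the reachable $e'$ form a subfamily of the share-a-vertex pairs and none of the verifications below change.

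The first bullet is immediate: taking $u=v$ (legal since $v\in N_I[v]$) and then $x=w$ (legal since $w\in N(v)$ and $q_v$ has full support) reproduces $e$, so $e\in\mathrm{supp}(D_e)$. For \eqref{revers}, suppose $\Pr[e\to e']>0$ with $e=(v,w)$ and $e'=(u,x)$; then $u\in N_I[v]\cup N_I[w]$ and $x\in N(u)$. Since $I$ is undirected and lattice-adjacency is symmetric, the endpoint of $e$ through which the $I$-step was taken is recovered symmetrically from $e'$, so sampling from $e'$ gives positive probability to a pair on the vertex set of $e$, and the full-support partner step then lands on $e$ itself; hence $\Pr[e'\to e]>0$. (Under the bare ``share-a-vertex'' reading this collapses to the tautology $e\cap e'\neq\emptyset\iff e'\cap e\neq\emptyset$.) For connectedness of $H(G)$: given pairs $(v,w)$ and $(v',w')$, an $I$-path $v=u_0,\dots,u_k=v'$ lifts to pairs $(u_0,w),(u_1,z_1),\dots,(u_k,z_k)$ with each $z_i\in N(u_i)$ --- always possible, as lattice-degree is positive --- and one further move reaches $(v',w')$; the connectivity of $I$ that this uses is exactly the connectedness of $H(G)$ that the discussion after Definition~\ref{mc} already insists upon.

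The single delicate point --- the one I would treat most carefully --- is that the scheduler's state is an (ordered) \emph{pair} whereas the informal description speaks of ``the previously scheduled \emph{agent}'': making the Markov property hold forces us to allow either endpoint of the current pair to act as the next infector, which is precisely what the closed-neighbourhood convention $N_I[v]\cup N_I[w]$ provides, and it is also what lets \eqref{revers} go through with no assumption on $I$ beyond undirectedness. Everything else is routine verification.
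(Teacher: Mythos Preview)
The paper offers no separate proof of this observation; its entire justification is the final sentence of the statement itself, which simply names a construction (``define $D_e$ to consist of pairs $e'$ that share with $e$ a vertex'') without checking any of the conditions of Definition~\ref{mc}. Your proposal is therefore considerably more careful than the paper: you give an explicit two-stage sampling rule that actually uses the influence network $I$, and you verify the self-loop condition $e\in\mathrm{supp}(D_e)$, the symmetry condition~\eqref{revers}, and connectedness of $H(G)$ in turn. You also correctly note that the paper's bare ``share-a-vertex'' reading only literally matches the informal scenario when $I$ coincides with the lattice. The one point worth making explicit is that your verifications of $e\in\mathrm{supp}(D_e)$ and of~\eqref{revers} tacitly assume that every pair $(v,w)$ in play has $w\in N(v)$ --- i.e., that the scenario only ever generates lattice-adjacent pairs. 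That is indeed what the informal description (``randomly choose a swapping neighbour'') says, but since the ambient state space in the paper is all of $V\times V$, it would be cleaner to state this restriction up front. Otherwise your argument is correct and strictly more informative than what the paper supplies.
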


\begin{observation}
 The random scheduler is a particular case of Definition~\ref{mc}, when $D_{e}$ is the uniform distribution on $V\times V$. 
\end{observation}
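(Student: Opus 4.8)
The plan is simply to verify that the uniform choice $D_e\equiv U$, where $U$ is the uniform distribution on $V\times V$, instantiates Definition~\ref{mc} and that the resulting dynamics coincides with the random scheduler. First I would recall that the \emph{random scheduler} is the one that, at every stage, picks a pair of vertices from $V\times V$ uniformly at random, independently of all previous choices. Setting $D_e=U$ for every pair $e$, the succession probabilities of Definition~\ref{mc} become $Pr[e\rightarrow e^{\prime}]=1/|V\times V|=1/N^{4}$ for all $e,e^{\prime}$, a quantity that does not depend on $e$. Hence the ``random walk'' on $H(G)$ described after Definition~\ref{mc} is the walk whose transition out of any vertex is uniform over all vertices; its successive positions are therefore i.i.d.\ uniform on $V\times V$, which is precisely the random scheduler.

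It then remains to check that $U$ satisfies the three requirements imposed in Definition~\ref{mc}. The support condition $e\in supp(D_e)$ holds because $U$ has full support on $V\times V$, so in particular $e$ itself is selected with positive probability, as required by the accompanying footnote. The reversibility condition~(\ref{revers}) is immediate, since $Pr[e\rightarrow e^{\prime}]=1/N^{4}=Pr[e^{\prime}\rightarrow e]>0$ for every ordered pair $(e,e^{\prime})$. Finally, the associated bidirected graph $H(G)$ is the complete (bidirected) graph on vertex set $V\times V$, hence connected, so no initial segment of moves can ever lock a pair out of future scheduling.

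There is essentially no obstacle here: the statement is a consistency check between two descriptions of the same process, and the only point deserving care is the precise reading of ``random scheduler'' — one must confirm it refers to an i.i.d.\ uniform choice over \emph{all} pairs in $V\times V$ (rather than, say, a uniform choice restricted to adjacent pairs), which is exactly what the degenerate, history-independent random walk above produces.
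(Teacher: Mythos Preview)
Your verification is correct and complete; the paper itself offers no proof for this observation, treating it as self-evident. Your explicit check of the support condition, condition~(\ref{revers}), and connectedness of $H(G)$ is more than the paper provides, but entirely in line with what a proof would require.
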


To complete the description of the dynamics, we only need to specify
the probability that two agents inhabiting the different endpoints
of a pair $e=(u,v)$ switch when pair $e$ is scheduled. This is
accomplished using the so-called \textit{log-linear response rule} \cite{blume-population-games,peyton-young-book,alos2010logit},
specified as follows: let $S$ be the state before the switch and $T$ be the state obtained if the two agents at $u,v$ switch. Then: 

\begin{equation}
 Pr[S\rightarrow T]=\frac{e^{\beta\cdot [u_{1}(T)+
 u_{2}(T)]}}{e^{\beta[u_{1}(S)+u_{2}(S)}]+e^{\beta[u_{1}(T)+u_{2}(T)]}},
 \label{eq:def} 
\end{equation}

where $u_{1},u_{2}$ are the corresponding utility functions of the
two agents at the endpoints of the scheduled pair, and $\beta >0$ is
a constant. This is, of course, the noisy version of the
best-response move, that would choose the move that maximizes the
sum of utilities $u_{1}(\cdot)+u_{2}(\cdot)$.

The \textit{state of the system} is defined by a  vector $\overline{w}\in \{-1,+1\}^{V}$ is a vector encoding the labels of all vertices of the torus. To obtain a description of the dynamics as an aperiodic Markov chain we have to complete the description of the system state by a pair $r$ of vertices, i.e., the last pair that had the opportunity to switch by being scheduled.  Thus the state space of the Markov chain is $S(G):= {\{\pm 1 \}^V} \times (V\times V)$. When $H(G)$ is strongly connected, $M_\beta$ is ergodic, so it has an unique stationary distribution
$\Pi_\beta$.  It is easy to see (and similar to previous results e.g. in \cite{peyton-young-book}) that the family of chains $(M_\epsilon)_{\epsilon > 0}$ is a regularly perturbed Markov process (where we define $\epsilon = e^{-\beta}$). 
\begin{definition}
 A state $w\in {\{\pm 1 \}^V} $ is called {\it maximally segregated} if $w$ realizes the minimum value of the number of red-blue {\it edges} (of the torus), across all possible states on $G$ with a given number of red/blue agents.
 \end{definition}

In \cite{pollicott-weiss} a complete characterization of a maximally
segregated state was obtained (Theorem 2 in that paper). Roughly
they are horizontal or vertical "bands", possibly with a "strip"
attached, or a rectangle, possibly with at most two "strips"
attached. We refer the reader to 
\cite{pollicott-weiss} for details, and don't discuss it any further.
\vspace{-3mm}
\section{Main result and its interpretation.}
\vspace{-2mm}
Our main result is: 

\begin{theorem}\label{young-contagion} \label{rwalk} The stochastically stable states for
Schelling's segregation model with Markovian contagion form a subset
of the set $Q\subseteq S(G)$, 
\begin{equation} 
Q=\{(w,e)|w\mbox{ is maximally segregated and }e\in V\times V\}
\end{equation} 
\end{theorem}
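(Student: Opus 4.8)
The plan is to apply Young's characterization (Proposition~\ref{characterization}): it suffices to show that no tree of minimal resistance can be rooted at a state $(w^*,e^*)$ whose configuration $w^*$ is not maximally segregated. Write $b(w)$ for the number of red--blue edges of a configuration $w$, and $w_{e}$ for $w$ with the two agents at the endpoints of the pair $e$ interchanged. The first step is to read off the resistances from (\ref{eq:def}): expanding $u_{1}(T)+u_{2}(T)-[u_{1}(S)+u_{2}(S)]$ for a swap along $e$, the agent-specific constants $\epsilon_{i}$ cancel and the difference equals $2r\,(b(S)-b(T))$. Hence the resistance of a swap transition $(w,e)\to(w_{e'},e')$ is $2r\max(0,\,b(w_{e'})-b(w))$, the resistance of a ``decline'' transition $(w,e)\to(w,e')$ is $2r\max(0,\,b(w)-b(w_{e'}))$, and every transition that only advances the scheduled pair has resistance $0$. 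Setting the energy $E(w):=2r\,b(w)$, every transition $x\to y$ then satisfies $r(x\to y)\ge E(y)-E(x)$, so by telescoping every directed path from $x$ to $y$ has resistance at least $E(y)-E(x)$. Since $E$ is minimized exactly on maximally segregated configurations, in any tree $T$ rooted at a state $z$ the $T$-path from a maximally segregated state $z_{0}$ to $z$ has resistance at least $E(z)-E_{\min}$.

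The genuinely new ingredient is to show that the Markovian contagion scheduler never forces extra resistance onto an improving move. Because $H(G)$ is connected and $e\in\mathrm{supp}(D_{e})$, a sequence of ``decline'' transitions lets the scheduled pair perform an arbitrary walk on $H(G)$ at zero resistance \emph{unless} that walk passes over a pair whose swap strictly decreases $b$, in which case one simply executes that (zero-resistance, strictly improving) swap and continues. Choosing, among all configurations reachable from $(w,e)$ by zero-resistance paths, one of minimal $b$, one obtains: from any state there is a zero-resistance path to a state $(w_{1},e_{1})$ with $b(w_{1})\le b(w)$ and $w_{1}$ a \emph{local minimum} (no single swap strictly decreases $b$). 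This same pointer-walking device, inserted between consecutive prescribed swaps, shows that any finite sequence of swaps used in the combinatorial analysis of \cite{pollicott-weiss} to descend Schelling's landscape can be carried out in our chain \emph{without additional resistance}. Invoking that analysis (their characterization of maximally segregated states and the moves underlying it), one concludes: for every configuration $w^{*}$ that is not maximally segregated there is a path $P$ in our chain from $(w^{*},e^{*})$ to some maximally segregated state $z_{0}$ with total resistance strictly less than $E(w^{*})-E_{\min}$ --- morally, escaping a sub-optimal local minimum is cheap because the barriers that must be climbed are smaller than the net descent that follows.

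To finish, suppose for contradiction that $z^{*}=(w^{*},e^{*})$ with $w^{*}$ not maximally segregated is stochastically stable, and let $T$ be a minimal-resistance tree rooted at $z^{*}$. Let $z_{0}$ be a maximally segregated state reachable from $z^{*}$ by a path of resistance $<E(w^{*})-E_{\min}$, and let $Q$ be the $T$-path from $z_{0}$ to $z^{*}$. Replace in $T$ the edges of $Q$ by a reverse path $\bar Q$ from $z^{*}$ to $z_{0}$ (existence uses condition~(\ref{revers}) together with $e\in\mathrm{supp}(D_{e})$); this yields a tree $T'$ rooted at $z_{0}$ with $\mathrm{res}(T')=\mathrm{res}(T)-\mathrm{res}(Q)+\mathrm{res}(\bar Q)$. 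Since $Q$ ascends in energy from $E_{\min}$ to $E(w^{*})$, we have $\mathrm{res}(Q)\ge E(w^{*})-E_{\min}$; and the edge-level identity $r(x\to y)-r(y\to x)=E(y)-E(x)$ on swap transitions (the pair-advancing transitions leaving $E$ unchanged) forces $\mathrm{res}(\bar Q)$ to fall below $\mathrm{res}(Q)$ by the net ascent $E(w^{*})-E_{\min}>0$ of $Q$. Hence $\mathrm{res}(T')<\mathrm{res}(T)$, contradicting stochastic stability of $z^{*}$; therefore every stochastically stable state has a maximally segregated configuration, i.e.\ lies in $Q$.

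I expect the main obstacle to be the interface between the two ingredients. The landscape analysis of \cite{pollicott-weiss} --- that sub-optimal local minima can be left at resistance below $E(w^{*})-E_{\min}$ --- is the quantitative heart of the argument and must be quoted and (for the $N\times N$ torus) adapted with care. The genuinely new difficulty is threading the pointer-walking argument of the second step through that analysis so that the scheduler constraints contribute no resistance, and handling the pair-component of the state in the final surgery: reversing a swap transition returns the scheduled pair to the just-swapped pair rather than to its previous value, so $\bar Q$ must be assembled with this in mind, and one must check that a minimal-resistance tree may be taken to contain no ``declined improving swap'' edge along $Q$, so that $\bar Q$ is genuinely cheaper than $Q$.
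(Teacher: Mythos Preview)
Your core strategy---apply Young's tree criterion, take the tree path $Q$ from a maximally segregated state $z_0$ to the candidate root $z^*$, replace it by a reversed path $\bar Q$, and compare resistances---is exactly the paper's approach. Two points, however, deserve correction.

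First, your entire second step (invoking the Pollicott--Weiss landscape analysis to build a ``cheap escape path'' $P$ from $z^*$ to $z_0$ of resistance below $E(w^*)-E_{\min}$) is never used in your final surgery and is unnecessary. In the surgery you reverse the \emph{tree} path $Q$, not $P$; and the edge-level identity $r(x\to y)-r(y\to x)=E(y)-E(x)$ in fact holds for \emph{every} transition of the chain, including the ``declined improving swaps'' you worry about: the reversal of a declined swap at pair $e'$ is again a declined swap at $e'$ from the same configuration, hence carries the same positive resistance, while $E$ is unchanged. Telescoping then gives $r(Q)-r(\bar Q)=E(z^*)-E(z_0)>0$ with no barrier analysis whatsoever. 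So the hard quantitative input you propose to import from \cite{pollicott-weiss} plays no role, and your stated obstacle of purging declined-improving edges from $Q$ dissolves.

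Second, there is a genuine gap where you write ``this yields a tree $T'$ rooted at $z_0$''. It does not: the reversed path $\bar Q$ visits states of the form $(s_{k-1},e_k)$, not the internal vertices $(s_k,e_k)$ of $Q$, so the subtrees $W_k$ of $T$ hanging off those internal vertices are orphaned after the swap $Q\mapsto\bar Q$. The paper spends most of its effort precisely here: a four-case analysis (according to whether $s_k=s_{k-1}$ and whether the forward step $(s_{k-1},e_{k-1})\to(s_k,e_k)$ has positive resistance) shows that each $W_k$ can be reconnected to a vertex on $\bar Q$---or, via a zero-resistance segment of $Q$, to $z^*$ itself---by adding only zero-resistance edges; the resulting directed graph in which every vertex reaches $z_0$ is then thinned to the desired tree $\overline T$. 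You flag the pair-component issue in your final paragraph but do not resolve it; this reattachment, and not any landscape analysis, is the substantive combinatorial work in the proof.
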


\noindent In other words: \textbf{the conclusion that stochastically stable states in Schelling's segregation model are  maximally segregated  is robust to extending the update model from a random one to those from the family from Definition~\ref{mc}, that incorporate Markovian contagion.}

The defining  feature of the class of schedulers in the previous result seems to be that they choose the next scheduled pair \textit{nonadaptively}: the next pair only depends on the last scheduled pair, and \textbf{not} on other particulars of the system state. Indeed, in the full version of the paper we will complement the result above by another one (very easy to state and prove), that shows that  some \textbf{adaptive} scheduler (despite being quite fair) can forever preclude the system from ever reaching maximal segregation (thus "breaking the baseline stylized result").   
\begin{proof} 

We will employ a fundamental property, noted for models of segregation such as the one in this paper e.g. in \cite{segregation-jebo-zhang}: they are \textit{potential games} \cite{potential-games}, i.e., they admit a function $L:S(G)\rightarrow {\bf R}$ such
that, for any player $i$, any strategy profile (i.e., vector of player strategies) $x_{-i}:= (x_{j})_{j\in V, j\neq i}$, and any two strategies $z,t$ for player $i$
\begin{equation} 
u_{i}(z;x_{-i})-u_{i}(t;x_{-i}) = L(z;x_{-i})- L(t;x_{-i}).
\end{equation} 

In other words, differences in utility of the $i$'th player as a result of using different strategies are equal to the differences in potential among the two corresponding profiles. The function $L$ is defined simply
as $L(s)=\sum_{i} u_{i}(s).$ Strictly speaking the potential above is defined for the original Policott-Weiss model i.e., defined on $\{\pm 1\}^{V}$, instead of $S(G)$. But it can be easily extended by simply applying it to any pair $(s,e)$ (thus neglecting $e$).
Moreover, the following property holds, which determines the resistance of moves: 

\begin{lemma} Let $A\in\{\pm 1\}^V$ be a state of the system, and let $e=(i,j)$ be a pair in $V\times V$. Let $B$ be a state obtained by making the move $m=A\rightarrow B$
($B$ is either $A$ or is the state obtained from $A$ by swapping the states $A_i,A_j$). Then the resistance $r(m)$ of move $m$ is equal to:
\begin{equation} 
r(m)=[u_{i}(A)+u_{j}(A)]- [u_{i}(B)+u_{j}(B)]=2(L(A)-L(B)) >0
\end{equation} 
 when $A\rightarrow B$ is a swap that diminishes potential, and to 
\begin{equation} 
[u_{i}(C)+u_{j}(C)]- [u_{i}(A)+u_{j}(A)]=2(L(C)-L(A)) >0
\end{equation} 
 when $A=B$, but the corresponding swap $A\rightarrow C$ would be a potential improving one.  In all other cases $r(m)=0$ (that is $m$ is a neutral move). 
\label{diff}
\end{lemma}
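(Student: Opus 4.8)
\medskip
\noindent\emph{Proof idea.}
The plan is to read off the resistances directly from the log-linear rule~\eqref{eq:def}, after observing that the Markovian scheduler contributes no resistance. A one-step transition of $M_\beta$ from a state $(A,p)$ to a state $(B,e)$ factors as the probability of the random walk on $H(G)$ moving from pair $p$ to pair $e$, times the probability, given by~\eqref{eq:def}, that the agents at the endpoints of $e$ do (respectively, do not) swap. Whenever that move is possible the first factor is a positive constant independent of $\beta$, hence of $\epsilon=e^{-\beta}$, so it has resistance $0$; thus the resistance of the whole transition equals that of the accept/decline decision along $e=(i,j)$, which is exactly the quantity the lemma denotes $r(m)$, and it depends only on $A$ and on whether $B=A$ or $B$ is the swapped state.

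\medskip
\noindent Next I run the elementary case analysis. Write $S_A:=u_i(A)+u_j(A)$, let $C$ be the state obtained from $A$ by exchanging $A_i$ and $A_j$, and set $S_C:=u_i(C)+u_j(C)$, so that $C=B$ in the first case of the lemma and $C\neq B=A$ in the second. By~\eqref{eq:def} the swap is accepted with probability $\bigl(1+e^{\beta(S_A-S_C)}\bigr)^{-1}$ and declined with probability $\bigl(1+e^{\beta(S_C-S_A)}\bigr)^{-1}$. Substituting $\epsilon=e^{-\beta}$ and letting $\epsilon\to0$ yields the trichotomy asserted by the lemma: if $S_A>S_C$ the accept probability is $\Theta(\epsilon^{\,S_A-S_C})$ while the decline probability tends to $1$, so $r(A\to C)=S_A-S_C$ and $r(A\to A)=0$; if $S_A<S_C$ the two roles are interchanged; and if $S_A=S_C$ — in particular whenever $A_i=A_j$, so that no genuine swap is available — both probabilities tend to $\tfrac{1}{2}$, so both moves are neutral. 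This already gives the first equality in each of the two displayed formulas.

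\medskip
\noindent It remains to rewrite $S_A-S_C$ in terms of the potential $L=\sum_k u_k$ and to obtain strict positivity. Since $A$ and $C$ differ only in the colors at $i$ and $j$, the only utilities that change are those of the agents at $i$, $j$ and at their torus-neighbors; writing $u_k(x)=r\sum_{\ell\in N(k)}x_kx_\ell+\epsilon_k$ and expanding $L(A)-L(C)$ edge by edge over the torus, one sees that the combined contribution of the neighbors of $i$ and $j$ reproduces that of $i$ and $j$ themselves, so that $L(A)-L(C)$ is a fixed positive multiple of $S_A-S_C$ (the exact numerical constant following at once from the relation between $L$ and the per-agent utilities), and in particular the two quantities share a sign. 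Under the lemma's hypothesis (that the swap strictly decreases, resp.\ would strictly increase, the potential) the relevant difference is therefore strictly positive, and every remaining transition — a genuine swap that does not decrease the potential, or a ``stay'' when the associated swap would not increase it — is neutral. The lemma is thus essentially a direct computation, and the only step I would flag as requiring care is this last identification of $S_A-S_C$ with the potential drop: in particular the sub-case in which $i$ and $j$ are adjacent or share a neighbor on the torus, so that some of the neighbors whose utilities change are themselves among $\{i,j\}$. The cleanest way around the attendant case split is to decompose the swap into two consecutive single-agent recolorings, $A\to A'$ (agent at $i$ recolored to $A_j$) then $A'\to C$ (agent at $j$ recolored to $A_i$), and to apply the potential-game identity to each in turn.
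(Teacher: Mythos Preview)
Your argument is correct and follows the same approach the paper takes: the paper's own proof is a single line (``Follows directly from equation~(\ref{eq:def}) and the definition of resistance''), and what you have written is precisely the unpacking of that line --- factoring out the $\beta$-independent scheduler step, reading off the $\Theta(\epsilon^{r})$ asymptotics from the logit formula, and then matching the utility difference to the potential difference. Your explicit flag on the adjacency/common-neighbor sub-case, and the two-step recoloring trick to handle it uniformly via the potential-game identity, is a genuine addition of care over what the paper provides; the paper simply asserts the potential relation without verifying the constant, and indeed your caution in writing ``a fixed positive multiple'' rather than committing to the displayed factor $2$ is warranted, since what the edge-by-edge expansion actually gives is $L(A)-L(C)=2(S_A-S_C)$ rather than the reverse --- but as you implicitly note, only the shared sign matters for the lemma's use downstream.
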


In other words, a move has positive resistance when one of the
following two alternatives hold: (a). The move corresponds to a
decrease in potential. The resistance of the move is, in this case, equal to the potential decrease. 
(b). The move corresponds to preserving the current state (as well as agents' utilities), but the other possible move would have led to a state of higher potential. The resistance of the move is, in this case,  equal to the difference in potentials between this better state and the current one. 

\begin{proof} 
Follows directly from equation~(\ref{eq:def}) and the definition of resistance. 
\end{proof} 

We apply this result to prove the following lemma: 

\begin{lemma} Consider a state $Y\in Q$ that is maximally segregated. Consider another state $X$, and a tree $T$ rooted at $X$ having minimal potential. Then there exists another tree
$\overline{T}$ rooted at $Y$ whose potential is at most that of tree $T$, strictly less in case when $X$ is not a
maximally segregated state.
\end{lemma}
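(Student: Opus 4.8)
The plan is to use Young's tree characterization (Proposition~\ref{characterization}) together with the resistance formula of Lemma~\ref{diff}, proceeding by a tree-surgery argument: starting from a minimal-resistance tree $T$ rooted at an arbitrary state $X$, I will re-root it at the maximally segregated state $Y$ without increasing total resistance. First I would recall that the state space is $S(G)={\{\pm1\}^V}\times(V\times V)$, so a ``state'' carries both a coloring and a last-scheduled pair; since the potential $L$ ignores the pair coordinate, the resistances of Lemma~\ref{diff} depend only on the coloring component and on which swap is enabled. The key structural fact I want to extract from Lemma~\ref{diff} is a \emph{conservation/telescoping property}: along any directed path in $H(G)$-compatible moves, the resistances accumulated going ``downhill'' in potential telescope, so that the total resistance of a path from a state with coloring $w$ to a state with coloring $w'$ is at least $2(L(w)-L(w'))$ when $L(w)\ge L(w')$, with equality achievable along monotone descending paths. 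In particular every rooted tree $T$ at $X$ has resistance at least $\sum_{v\in S(G)}$-worth of such drops, and a maximally segregated $Y$ minimizes the number of red-blue edges, hence (by the definition of $w$ and $L(s)=\sum_i u_i(s)$, which up to the additive $\epsilon_i$ terms equals $-2r\cdot(\#\text{red-blue edges})$ plus a constant) \emph{maximizes} the potential $L$ over all colorings with the given color counts.

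The main construction: given the minimal-resistance tree $T$ rooted at $X$, consider the unique directed path $\pi$ in $T$ from $Y$ to $X$. I will build $\overline T$ by reversing $\pi$ — redirecting all edges of $\pi$ so they now point toward $Y$ — and keeping every other edge of $T$ unchanged; this is the standard re-rooting operation and yields a valid spanning tree rooted at $Y$. The resistance change is $r(\overline T)-r(T)=\sum_{(a\to b)\in\pi}\bigl(r(b\to a)-r(a\to b)\bigr)$. Using Lemma~\ref{diff}, each term $r(b\to a)-r(a\to b)$ equals $2(L(a)-L(b))$ when the forward edge is a potential-decreasing swap (so the reversed edge, being potential-increasing, has resistance $0$ only if $a=b$ in coloring — here I need to be careful and instead pair the forward swap's resistance against the reverse swap's resistance directly), and in the neutral/blocked cases the difference is controlled similarly. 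Summing telescopes to $r(\overline T)-r(T)=2\bigl(L(X)-L(Y)\bigr)\le 0$ since $L(Y)$ is maximal; and the inequality is \emph{strict} exactly when $L(X)<L(Y)$, i.e. when $X$ is not maximally segregated. That delivers both conclusions of the lemma.

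The hard part will be handling the pair-coordinate bookkeeping and the two distinct positive-resistance cases of Lemma~\ref{diff} uniformly in the telescoping sum. When an edge $a\to b$ of $\pi$ is a genuine swap that lowers potential, its reverse $b\to a$ is a swap that raises potential, and its resistance is $0$ — so the naive telescoping would need $r(a\to b)=2(L(a)-L(b))$, which is exactly what Lemma~\ref{diff} gives; good. But when $a\to b$ is a ``blocked'' move (coloring unchanged, resistance $=2(L(C)-L(a))$ for the better alternative $C$), the reverse move $b\to a$ is also coloring-preserving and one must check its resistance is $0$ (it is, provided the alternative swap available at $b$'s enabled pair does not improve potential — which may fail, requiring a slightly more careful choice of $\pi$ or an argument that such configurations can be routed around). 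I expect to resolve this by first arguing we may assume $T$ uses only swap-moves along $\pi$ (replacing any blocked move by the corresponding swap can only decrease resistance while preserving the tree structure, since the enabling condition $e\in\mathrm{supp}(D_e)$ guarantees the swap is reachable), after which the telescoping is clean. Verifying that this normalization of $T$ is legitimate — i.e., that it yields a genuine spanning tree on $S(G)$ respecting the $H(G)$ walk structure — is the delicate point, and is where the nonadaptiveness hypothesis (the next pair depends only on the last pair, not on the coloring) is genuinely used.
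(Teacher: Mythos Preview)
Your high-level strategy---re-root $T$ along the path $\pi$ from $Y$ to $X$ and show the resistance change telescopes to $2(L(X)-L(Y))$---is the same as the paper's. But the implementation you sketch has a genuine gap that the paper spends most of its proof resolving.

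The problem is that in the state space $S(G)=\{\pm1\}^V\times(V\times V)$, the literal reversal of an edge of $\pi$ is \emph{not} a legal transition. A move $(s_k,e_k)\to(s_{k+1},e_{k+1})$ means: schedule pair $e_{k+1}$ (which must lie in $\mathrm{supp}(D_{e_k})$), then possibly swap at $e_{k+1}$ to turn $s_k$ into $s_{k+1}$. The reversed arrow $(s_{k+1},e_{k+1})\to(s_k,e_k)$ would require scheduling $e_k$ and obtaining $s_k$ from $s_{k+1}$ by a swap at $e_k$---but $s_k$ and $s_{k+1}$ differ (if at all) at the pair $e_{k+1}$, not $e_k$. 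So your ``standard re-rooting operation'' does not produce a tree of legal transitions. The paper deals with this by building a different reverse path
\[
q:\ (s_r,e_r)\to(s_{r-1},e_r)\to(s_{r-2},e_{r-1})\to\cdots\to(s_0,e_1)\to(s_0,e_0),
\]
which re-schedules $e_r$ first (using $e_r\in\mathrm{supp}(D_{e_r})$) to undo the last swap, then $e_{r-1}$, and so on; condition~(\ref{revers}) is what makes each successive pair schedulable. Crucially, the intermediate vertices of $q$ are \emph{different} from those of $p$: they are $(s_{k-1},e_k)$ rather than $(s_k,e_k)$. Hence the subtrees $W_k$ that hang off $p$ at $(s_k,e_k)$ are no longer attached to anything, and the paper's four-case analysis is entirely devoted to reattaching each $W_k$ to some vertex of $q$ (or to another already-reattached subtree) using only zero-resistance auxiliary edges.

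Your proposed normalization---replace blocked moves on $\pi$ by the corresponding swaps---does not rescue this. Replacing $(s_k,e_k)\to(s_k,e_{k+1})$ by $(s_k,e_k)\to(s_k',e_{k+1})$ changes the intermediate vertex, so the subtree $W_{k+1}$ formerly attached at $(s_k,e_{k+1})$ is now orphaned; you have recreated the reattachment problem rather than avoided it. The telescoping identity $r(p)-r(q)=2(L(Y)-L(X))$ itself (the paper's Claim~\ref{last}) is proved exactly as you outline, by pairing each edge of $p$ with the edge of $q$ that undoes it and running through the four swap/no-swap, zero/positive-resistance cases; but getting a legitimate rooted tree out of $q$ plus the $W_k$'s is the substantive work you are missing.
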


\begin{proof} 

Note that, by the definition of utility functions, maximally segregated states are those that maximize the potential. 

Since $T$ is an oriented tree, there is an unique directed path 
\begin{equation} 
p:[Y=(s_{0},e_{0})\rightarrow  \ldots
\rightarrow (s_{k},e_{k})\rightarrow (s_{k+1},e_{k+1})\rightarrow \ldots \rightarrow
(s_{r},e_{r})=X]
\label{path-p}
\end{equation} 

in $T$ from $Y$ to $X$. Here $s_{0},s_{r}\in \{\pm 1\}^{V}$ are states, and $e_{0},e_{r}$ are pairs in $V\times V$. First, we decompose $T$ into three subsets as follows: 

\begin{enumerate}
\item The set of edges of $p$ (see Figure~\ref{tree-fig})
\item The set of edges of, $W_{Y}$ the subtree rooted at $y$.
\item The edges of subtrees $W_{k}$ rooted at nodes $(s_{k},e_{k})$ of $p$, other than $Y$ (but possibly including $X$).
\end{enumerate}

\begin{figure} 
\begin{center} 
\begin{tikzpicture}
\GraphInit[vstyle=Normal]
\Vertex[x=-1,y=-2]{Y}
\Vertex[NoLabel,x=1,y=-2]{Z}
\Vertex[NoLabel,x=3,y=-2]{R}
\Vertex[NoLabel,x=5,y=-2]{P}
\Vertex[x=7,y=-2]{X}
\Vertex[NoLabel,x=2,y=-4]{A}
\Vertex[NoLabel,x=4,y=-4]{B}
\Vertex[NoLabel,x=-2,y=-4]{C}
\Vertex[NoLabel,x=0,y=-4]{D}
\Vertex[NoLabel,x=6,y=-4]{E}
\Vertex[NoLabel,x=8,y=-4]{F}
\Edges[style=->,label=type 1](Y,Z)
\Edges[style=->,label=$\cdots$](Z,R,P)
\Edges[style=->,label=type 1](P,X)
\Edges[style=->,label=type 2](C,Y)
\Edges(C,D)
\Edges[style=->,label=type 2](D,Y) 
\Edges[style=->,label=type 3](E,X)
\Edges(E,F)
\Edges[style=->,label=type 3](F,X)
\Edges[label=type 3](R,A)
\Edges[label=$W_{k}$](A,B)
\Edges[label=type 3](B,R)
\end{tikzpicture}
\end{center}
\caption{Decomposition of edges of tree $T$. Path $p$ is on top.}
\label{tree-fig}
\end{figure}
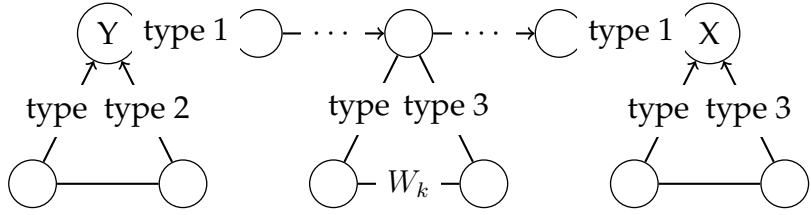 

\begin{lemma} 
Without loss of generality we may assume that the path $p$ contains no two consecutive vertices $(s_{k},e_{k})$ and $(s_{r},e_{r})$, $k<r$, with $s_{k}=\ldots = s_{r}$ and all the moves between $s_{k}$ and $s_{r}$ having zero resistance. 
\label{foo}
\end{lemma} 
\begin{proof}
Suppose there was such a pair $k,r$. Take one with maximal $k-r$. Define a tree $T^{\prime}$ by
\begin{itemize} 
\item First connecting $(s_{k},e_{k})$ directly to $(s_{r+1},e_{r+1})$. This is legal since $s_{k}=s_{r}$. Indeed, since activating edge $e_{r+1}$ move the system from $s_{r}$ to $s_{r+1}$, activating the same edge moves the system from $(s_{k},e_{k})$ to $s_{r+1}$ as well. 
\item Also connecting $(s_{k+1},e_{k+1}),\ldots, s_{r},e_{r})$ directly to $(s_{r+1},e_{r+1})$
\end{itemize} 
It is easy to see that $T^{\prime}$ is a tree with the same resistance as $T$, since the only removed edges have zero resistance. 
\end{proof}\qed

In particular, applying iteratively Lemma~\ref{foo}, we may assume that all transitions on the path from $p$ to $q$ either change the system state or have positive resistance.

Now, to obtain tree $\overline{T}$ we will first obtain a graph $\overline{\overline{T}}$, in which every node has \textit{at least one path} to node $Y$. "Thinning out" this directed graph to a tree yields a tree $\overline{T}$ of even lower resistance. To obtain directed graph $\overline{\overline{T}}$: 

\begin{enumerate}
 \item First, add to $\overline{\overline{T}}$ the edges of $T_{Y}$.

\item Next, define path $q$ from $X$ to $Y$ as follows: 
\begin{equation} 
 q:[(s_{r},e_{r})\rightarrow
(s_{r-1},e_{r})\rightarrow (s_{r-2},e_{r-1}) \rightarrow \ldots \rightarrow
(s_{0},e_{1}) \rightarrow (s_{0},e_{0})]. 
\end{equation}

In other words, $q$ aims to ``undo'' the sequence of moves in path $p$ from equation~(\ref{path-p}) from $Y$ to $X$. However, since the states of the Markov chain also have as a second component a pair in $V\times V$, (corresponding to the last scheduled edge), we need to take a little extra care when defining $q$. Specifically $q$ starts at $X$ but \textbf{cannot} simply reverse the edges of $p$, since these do \textbf{not} correspond to legal moves. To define $q$, we first make a move at $e_{r}$ by ``undoing'' the last move of $p$\mbox{ } \footnote{this is where we use a property specific to our model of Schelling segregation, as opposed to proving a result valid for general potential game: the property that we employ is that in Schelling's model any move $m$ "can be undone". This means that there is a move $n$ using the same pair of vertices as $m$ that brings the system back to where it was before. Move $n$ simply "swaps back" the two agents if they were swapped by $m$, and leaves them in place otherwise.}. This yields state $(S_{r-1},e_{r})$ (since pair $e_{r}$ is scheduled in this as well). We then continue to ``undo moves of $p$'' until the state becomes $S_{0}$. This is possible because of condition~(\ref{revers}): since pair $e_{r+1}$ can be scheduled after $e_{r}$, scheduling $e_{r}$ can move the system from $s_{r}$ to $s_{r-1}$, scheduling pair $e_{r-1}$ can move the system from $s_{r-1}$ to $s_{r-2}$, and so on, until state $S_{0}$ is reached.  
At this moment the last activated pair was $e_{1}$. $q$ then moves to $Y$ by making a move (with no effect) on $e_{0}$. 

Note that every such  path will contain, for every $k=1,r$, a vertex whose state is $s_{k}$. 

\item For every tree component $W_{k}$ obtained by removing path $p$ from $T$, attached to $p$ at $(s_{k},e_{k})$ perform one of the following: 
\begin{enumerate} 
\item[-]{\bf Case 1:} {\it $s_{k}=s_{k-1}$}.

In this case the point $(s_{k},e_{k})=
(s_{k-1},e_{k})$ is on path $q$ as well, therefore we also add the rooted tree $W_{k}$ to $\overline{\overline{T}}$. This is possible since attaching a tree to a node depends only on the system state, but \textit{not} on the last scheduled node. Moreover, \textit{the resistance of $W_{k}$ does not change} as a result of this attachment. 

\item[-]{\bf Case 2:} {\it $s_{k}\neq s_{k-1}$ and move
$(s_{k-1},e_{k-1})\rightarrow (s_{k},e_{k})$ has resistance $>0$.}

In this case, since in configuration $s_{k-1}$ and scheduled move
$e_{k}$ we have a choice between moving to $s_{k}$ and staying in
$s_{k-1}$, it follows that the transition $(s_{k-1},e_{k-1})\rightarrow
(s_{k-1},e_{k})$  has zero resistance and $L(s_{k})<L(s_{k-1})$. 
Hence transition $(s_{k},e_{k})\rightarrow (s_{k-1},e_{k})$ has zero resistance. 


We now add  the tree $\overline{W_{k}}=W_{k}\cup 
\{(s_{k},e_{k})\rightarrow (s_{k-1},e_{k})\}$ (rooted at node $(s_{k-1},e_{k})$, which is on $q$) to  $\overline{\overline{T}}$.   The tree $\overline{W_{k}}$ has the same total resistance as $W_{k}$. All nodes from $W_{k}$, including $(s_{k},e_{k})$ can now reach $Y$ via $q$. 

\item[-]{\bf Case 3:} \textit{$s_{k-1}\neq s_{k}$, move
$(s_{k-1},e_{k-1})\rightarrow (s_{k},e_{k})$ has zero resistance and all moves on $p$ between $s_{k}$ and $X$ have zero resistance.}

Then we add to $\overline{\overline{T}}$ this portion of $p$, together with $W_{k}$. This way we connect nodes in $W_{k-1},W_{k}$ to $Y$ (via $X$ and $q$). All added edges except those of one of the trees $W_{l}$ have zero resistance. 

\item[-]{\bf Case 4:} \textit{$s_{k-1}\neq s_{k}$, the move
$(s_{k-1},e_{k-1})\rightarrow (s_{k},e_{k})$ has zero resistance, but some 
move on $p$, between $s_{k}$ and $X$ has positive resistance.}

Let $(s_{k+l},e_{l})\rightarrow (s_{k+l+1},e_{l+1})$ be the closest move (i.e., the one that minimizes $l$) with positive resistance. 

If $s_{k+l}=s_{k+l+1}$ then we have already connected $(s_{k+l},e_{l})$ to $Y$, as it falls under Case 1. Now just add all the (zero resistance) edges of $p$ between $s_{k},e_{k}$ 
and $(s_{k+l},e_{k+l})$, together with edges of $W_{k}$, to connect all such nodes to $Y$. 

If $s_{k+l}=s_{k+l+1}$ then we have already connected $(s_{k+l},e_{l})$ to $Y$, as it falls under Case 2. We proceed similarly. 
\end{enumerate} 
\end{enumerate}

The previous construction has ensured that any pair $(s,e)$ is connected by \textit{at least} one path to $Y$.
Thinning out $\overline{\overline{T}}$ we get a rooted tree $\overline{T}$ having resistance less or equal to the resistance of $\overline{\overline{T}}$.  Since the four outlined transformations only add, in addition to trees $W_{k}$, edges of zero resistance, to compare the
total resistances of $T$ and $\overline{T}$ one should simply compare the total resistances of paths $p$ and
$q$. We claim that this difference in resistances of these paths is equal to the difference in 
potentials: 
\begin{claim} 
$r(p)-r(q)=2(L(Y)-L(X))\geq 0.$
\label{last}  
\end{claim}  
Proving Claim~(\ref{last}) would validate our conclusion, since $L(Y)-L(X)\geq 0$, and $L(X)=L(Y)$ iff $X$ is a global minimum state for the potential function.  We prove this by considering the correspondence between edges of paths $p$ and $q$: to each edge $e$ of $p$ one can associate an unique edge $e^{\prime}$ of $Q$ that ``undoes e''.

By the additivity of both resistance and potential, it is enough to prove that, for every edge $e$ of $p$ and its associated edge of $q$, $e^\prime$, $r(e^{\prime})-r(e)$ is equal to twice the difference in potentials between $S_{fin}$, the final state for the forward transition and $S_{init}$, the initial state.  The first thing to note is none of the two resistances can be infinite: the transition $e\rightarrow e^{\prime}$ corresponds to a move of the perturbed Markov chain (optimal or not). Its inverse corresponds to "undoing" that move, which is a legal move (eventually perturbed) in itself.  We employ Lemma~\ref{diff} and identify several cases: 
\begin{itemize}
\item \textit{The move  $e$ corresponds to \textbf{not} switching, and its resistance is zero.} Let $S_{1}$ be the common state, and let $S_{2}$ be the state corresponding to a switch. 
Then $S_{1}=S_{init}=S_{fin}$. Also, $L(S_{1}) \geq L(S_{2})$. So the move $e^{\prime}$ also stays in state $S_{1}$ (when it could have gone to $S_{2}$), which is the optimal action, given that $L(S_{1}) \geq L(S_{2})$. Thus in this case both the "forward" and the "backward" transition have resistance zero, and do not count towards the sum of resistances on the path. 
\item \textit{The move  $e$ corresponds to \textbf{switching}, and its resistance is zero.} Then, $S_{fin}=S_{2}$, $S_{init}=S_{1}$. By Lemma~\ref{diff} $L(S_{2})>L(S_{1})$.  The backward move has positive resistance equal to $2(L(S_{2})-L(S_{1}))$. The result is verified. 

\item \textit{The move  $e$ corresponds to \textbf{not} switching, and its resistance is nonzero.}  Then $S_{init}=S_{fin}=S_{1}$ and $L(S_{1})<L(S_{2})$ (since switching would be beneficial). Therefore in the backward move the state stays $S_{1}$ (when it could have gone to $S_{2}$). The resistance is equal to $2(L(S_{2})-L(S_{1}))$, the same as the resistance of the forward move. Therefore $r(e^{\prime})-r(e)=L(S_{fin})-L(S_{init})=0$. 
\item \textit{The move  $e$ corresponds to \textbf{switching}, and its resistance is nonzero.} Then, $S_{fin}=S_{2}$, $S_{init}=S_{1}$, by Lemma~\ref{diff} $L(S_{2})<L(S_{1})$ and the resistance of the forward move is equal to $2(L(S_{1})-L(S_{2}))=2(L(S_{init})-L(S_{fin}))$. The resistance of the backward move is equal to zero, so the result is verified in this case as well. 

Thus the claim is established and the proof of the theorem is complete. 
\end{itemize} 

\end{proof} 
\vspace{-3mm}
\end{proof}
\section{Outlook and Further Work}

Theorem~\ref{young-contagion} is only the main result in the
adversarial analysis of Schelling's segregation model. It shows that
stochastically segregated states are maximally segregated. Is the
converse true ? Namely, is every maximally segregated state
stochastically stable ? Such a result is indeed true in 1D versions of Schelling's segregation model (such as the one presented in \cite{peyton-young-book}). We will discuss the  2D case with Markovian contagion in 
the journal version of the paper. 

Other topics deserving research include studying conditions that preclude segregation, determining the convergence time of the segregation dynamics with Markovian contagion, models with Markovian contagion and concurrent updates \cite{montanari2009convergence,auletta2013logit}, etc. We plan to address these and other issues in follow-up papers. 
\vspace{-3mm}
\small{

}
\end{document}